\definecolor{blue}{rgb}{0,0,1}
\definecolor{darkgreen}{rgb}{0,.5,0}
\definecolor{darkred}{rgb}{.5,0,0}
\definecolor{red}{rgb}{1,0,0}
\newcommand{\PT}[1]{\textcolor{black}{#1}}
\newtheorem{prop}{Proposition}
\def\given{\:|\:}
\def\L{\mathsf{L}}
\def\H{\mathsf{H}}
\def\U{\mathsf{U}}
\def\B{\mathsf{B}}
\newcommand{\ciid}{C_{\mathrm{IID}}}
\newcommand{\mir}{\mathcal{I}} 
\newcommand{\trient}{\mathscr{H}_3}
\def\Pr{{\mathrm{Pr}}}
\newcommand{\dfn}{\stackrel{\triangle}{=}}
\begin{document}

\title{Shannon Capacity of Signal Transduction for Multiple Independent Receptors}

\author{
	\IEEEauthorblockN{Peter J. Thomas}
	\IEEEauthorblockA{Dept. of Mathematics, Applied Mathematics, and Statistics\\ Dept. of Electrical Engineering and Computer Science\\ Dept. of Biology\\
	CWRU, Cleveland, Ohio, USA\\
	Email: pjthomas@case.edu}
	\and
	\IEEEauthorblockN{Andrew W. Eckford}
	\IEEEauthorblockA{Dept. of Electrical Engineering and Computer Science\\
	York University\\
	Toronto, Ontario, Canada\\
	Email: aeckford@yorku.ca}

\thanks{This work was in part supported by NSF grant DMS-1413770 and the Natural Sciences and Engineering Research Council (NSERC). \PT{The authors thank the five anonymous referees for their insightful comments, only some of which could be addressed in the space available.}}

}

\maketitle

\begin{abstract}
Cyclic adenosine monophosphate (cAMP) is considered a 
model system for signal transduction, the mechanism by which
cells exchange chemical messages.
Our previous work calculated the Shannon capacity of a single cAMP receptor; 
however, a typical cell may have thousands of receptors operating in parallel.
In this paper, we calculate the capacity of a cAMP signal transduction system with 
an arbitrary number of independent, indistinguishable receptors.
\PT{By leveraging prior results on feedback capacity for a single receptor, we show} (somewhat unexpectedly) that the capacity is achieved by an IID input distribution,
and that the capacity for $n$ receptors is $n$ times the capacity for a single receptor.
\end{abstract}

\section{Introduction}

In multicellular organisms, specialized cells must communicate with one another in order to coordinate
their action. The means by which a cell receives such signals is known as {\em signal transduction}:
numerous {\em receptors} on the cell's surface bind to signal-bearing
molecules, known as {\em ligands}; a bound receptor then relays the signal across the cell wall
by producing second messengers, which induce the cell to act. The ``instructions'' encoded
in signal transduction may govern tasks such as cell growth, apoptosis, differentiation, and many others.

The \textit{Dictyostelium} amoeba has on the order of 80,000 receptors uniformly distributed across its cell membrane, which are believed to act independently to transduce the cyclic adenosine monophosphate (cAMP) signal \cite{Sci:JinEtAl:2000}.  
We recently introduced a finite state channel model based on ligand-receptor binding (the BIND channel, \cite{ThomasEckford2015IEEEtrans_submission_arXiv,EckfordThomas2013ISIT}) for which we (1) rigorously obtained the capacity in a discrete time setting, (2) showed that the capacity is achieved by IID inputs, in discrete time, and (3) obtained a (non-rigorous) asymptotic expression for the mutual information rate in continuous time.  

There is a long history of research at the intersection of information theory and biology,
including notable work by Attneave \cite{Attneave1954} and Barlow \cite{Barlow1961} on 
sensory systems; Yockey \cite{Yockey1958b} on ionizing radiation and mutagenesis; 
and Berger \cite{BergerBook} on the efficiency of organ systems.
Recent progress on the computational and mathematical aspects of biology has led to a surge
of interest in biological information theory in general \cite{MitraEckford15}, and in information-theoretic analysis of signal transduction in particular (e.g., \cite{AndrewsIglesias07,thomas2003diffusion,kimmel2006information,PierobonAkyildiz11,RheeCheongLevchenko12,MahdavifarBeirami15}).
There has been much recent and related work on information-theoretic
tools with application to biological channels, such as unit output memory channels \cite{che05} (an example of which is the ``Previous Output is the STate'' (POST) channel \cite{PermuterAsnaniWeissman2014IEEETransIT}), a type of Markov channel that may be used to model receptors in signal transduction.
%
%
\PT{A recent paper \cite{TahmasbiFekri15} considers generalizations of the BIND channel to the case of  multiple receptors.} 

In \PT{the present} paper, we extend our 
previous results on the single receptor case to systems with multiple receptors,
where these receptors are {\em indistinct, independent,
and statistically identical}.  
As in earlier work, we consider discrete-time Markov models as an approximation of the receptor kinetics, modeled in continuous time
with the master equation; thus, we analyze the case where the
discrete time step $\tau \rightarrow 0$.
Our main result is to show, somewhat unexpectedly, that the capacity for $n > 1$ receptors is  achieved by an IID input distribution as $\tau \rightarrow 0$, while
the capacity for $n$ receptors is $n$ times the capacity for a single receptor.  \PT{Our analysis provides a closed form solution for the mutual information rate, valid for all $n$, that complements the asymptotic results and capacity bounds obtained in \cite{TahmasbiFekri15}.}

\section{System model}

\subsection{Model for a single receptor}

For a single receptor, we use a \PT{finite-state Markov channel model \cite{gol96}} identical to those described in 
\cite{ThomasEckford2015IEEEtrans_submission_arXiv,EckfordThomas2013ISIT}.
The cAMP receptor has two states: it may be {\em unbound}, awaiting the arrival
of a cAMP molecule; or it may be {\em bound} to cAMP, transducing the signal into
second messengers. We refer to these states as $\U$ and $\B$, respectively. (There exist far more complicated receptors, with larger state spaces; an advantage of analyzing cAMP is its 
simplicity.) 

For an individual receptor, let $p_\U(t)$ denote the probability that the receptor is in state $\U$ 
at time $t$ (resp., $p_\B(t)$ in state $\B$). It is known that this probability evolves according to a differential equation pair
(see also \cite{ThomasEckford2015IEEEtrans_submission_arXiv,Higham2008SIREV})
\begin{align}
	\label{eqn:MasterEquation}
	\frac{d p_\U(t)}{dt} &= - k_+ c(t) p_\U(t) + k_- p_\B(t) \\
	\label{eqn:MasterEquation2}
	\frac{d p_\B(t)}{dt} &= k_+ c(t) p_\U(t) - k_- p_\B(t)  
\end{align}
where $c(t)$ is the concentration of cAMP, and $k_+$ and $k_-$ are rate constants, corresponding to the $\U\rightarrow\B$ and
$\U\leftarrow\B$ reactions, respectively. 
Following the {\em principle of mass action}, $\U\rightarrow\B$ requires a cAMP molecule, therefore its rate is proportional to $c(t)$; however, $\U\leftarrow\B$ requires no external molecules, and its rate is therefore independent of $c(t)$.

The model in (\ref{eqn:MasterEquation})-(\ref{eqn:MasterEquation2}) 
can be \PT{approximated} by a discrete-time Markov chain,
and we take advantage of this discretization in obtaining our results.
We assume that the concentration $c(t)$ is binary:
$c(t) \in \{\L,\H\}$, where $\L$ is the lowest possible concentration, and $\H$ is
the highest possible concentration.%
\footnote{In \cite{ThomasEckford2015IEEEtrans_submission_arXiv}
we show that binary inputs achieve capacity for a single receptor \PT{(discrete time case)}. We expect this result to hold
for an arbitrary number of receptors, but in this paper we \PT{restrict attention to binary input distributions for simplicity.}
}
Let $\alpha_\H = k_+ \H$, $\alpha_\L = k_+ \L$, and $\beta = k_-$; further, 
let $\tau$ represent a discrete time step. The discrete-time approximation 
for the differential equation pair (\ref{eqn:MasterEquation})-(\ref{eqn:MasterEquation2})
is given by
\begin{align}
	p_\U(t + \tau) &= (1-\tau \alpha_{\H/\L}) p_\U(t) + \tau \beta p_\B(t) \PT{+o(\tau)}\\
	p_\B(t + \tau) &= \tau \alpha_{\H/\L} p_\U(t) + (1-\tau \beta)p_\B(t) \PT{+o(\tau)}
\end{align}
where $\alpha_{\H/\L}$ should be replaced with either $\alpha_\H$ or $\alpha_\L$, depending on
the concentration\PT{, and $f=o(\tau)$ means $\lim_{\tau\to 0}(f(\tau)/\tau=0)$.}
\PT{Neglecting terms $o(\tau)$, the channel state can be represented}
as a discrete-time Markov chain, with transition probability matrix
\begin{equation}
	\mathbf{P}_{\H/\L} = \left[ \begin{array}{cc} 1-\tau \,\alpha_{\H/\L} & \tau\, \alpha_{\H/\L} \\ \tau\, \beta & 1-\tau \,\beta \end{array} \right] .
\end{equation}
%
%
The state transition diagram for cAMP is given in Figure \ref{fig:single}.
\begin{figure}[t!]
	\begin{center}
	\includegraphics[width=2in]{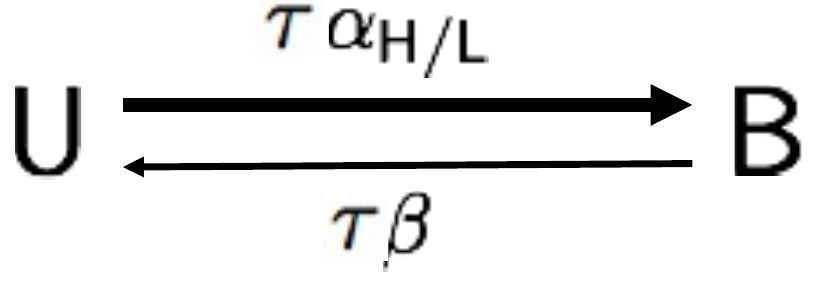}
	\end{center}
	\caption{\label{fig:single} State transition diagram for a \PT{single} cAMP receptor \PT{with discrete time step $\tau$}; arrows are marked with their transition probabilities. \PT{The input-sensitive state transition is indicated} by a bold arrow. See also \cite{EckfordThomas15}.}
\end{figure}


\subsection{Multiple receptors}

Suppose we have $n$ identical, independent receptors each with \PT{individual} binding probabilities \PT{$\tau\alpha_\L$, 
$\tau\alpha_\H$, and $\tau\beta$,} as defined above. 
We consider therefore a model with $n+1$ distinct states representing a population of $n$ indistinguishable receptors: state $k$ refers to the system with $k$ out of $n$ receptors bound to signaling (ligand) molecules.  If each receptor binds or unbinds signaling molecules \emph{independently} of the other receptors, then the state transition probabilities are as given in Figure \ref{fig:TransitionDiagram}.
\begin{figure*}[t!]
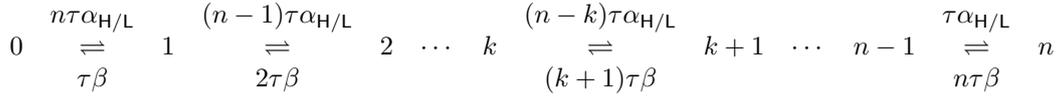

\begin{equation*}
\begin{array}{ccccccccccccc}
&n\tau\alpha_{\H/\L} &&(n-1)\tau\alpha_{\H/\L} &&&&(n-k) \tau\alpha_{\H/\L}&&&&\tau\alpha_{\H/\L}&\\
0&\rightleftharpoons&1&\rightleftharpoons&2&\cdots&k&\rightleftharpoons&k+1 & \cdots & n-1 & \rightleftharpoons & n\\
&\tau\beta &&2\tau\beta &&&&(k+1)\tau\beta&&&&n\tau\beta&
\end{array}
\end{equation*}
\caption{\label{fig:TransitionDiagram} State transition diagram for $n$ independent\PT{, indistinguishable} cAMP receptors obeying mass-action kinetics.}
\end{figure*}
%
From the figure,
because all $n$ receptors could potentially be bound (or unbound) at any one time, we require that the individual receptor binding or unbinding probabilities be sufficiently small, i.e.~\PT{$\text{max}(\tau\alpha_\H,\tau\beta)< \frac1n$.  For a given set of reaction rates, this condition can be met by reducing the size of $\tau$, the discrete time increment}.  Therefore for systems with large numbers of receptors, the continuous time setting is ultimately more natural than discrete time; moreover molecular communication systems do not generally have access to a reference clock as is normally the case in macroscopic engineered systems.   Following our previous work, however, we begin the analysis assuming a (small) discrete time step, and later consider the $\tau\to0$ limit of our mutual information results. 

 \PT{Channel definition: The input is a sequence of ligand concentrations $X_i\in\{\L,\H\}$. The output (also the state) is the number of bound receptors $Y_i\in\{0,1,\ldots,n\}$. The state transitions obey $\left(\mathbf{P}_{\H/\L}\right)_{k,k+1}=(n-k)\tau\alpha_{\H/\L}, \,0\le k \le n-1,$ and $\left(\mathbf{P}_{\H/\L}\right)_{k,k-1}=k\tau\beta,\, 1\le k \le n$; see the next section for more details. We require $\tau \,n\, \text{max}(\alpha_\H,\beta)<1$.}
This channel, which we call the $\text{BIND}^n(\alpha_\L,\alpha_\H,\beta)$ channel, is a member of the set of Chen-Berger unit output memory channels \cite{che05}. For all such channels,  
the feedback-capacity-achieving
input distribution has the form
\begin{equation}
	\PT{p(X^m || Y^m) = \prod_{i=1}^m} p_{X_i | Y_{i-1}}(x_i \given y_{i-1}) ,
\end{equation}
where $||$ represents causal conditioning; further,
$y_0$ is null. That is, each input is dependent {\em only on the previous state of the channel},
and no other past inputs or states.

Now consider an encoding scheme, exploiting feedback, in which the probability of sending input $\H$ when the channel is in state $k$ is $p_k := p_{X_i | Y_{i-1}}(\H \given k)$, for $0\le k \le n$.  When the channel is in the fully bound state ($Y=n$) each individual receptor releases its ligand with probability $\beta$, independent of the input concentration, so the choice of $p_n$ has no effect on information transmission (see also \cite{EckfordThomas2013ISIT}).  The capacity therefore requires optimizing over the \PT{$n$ free parameters} $p_0,\ldots,p_{n-1}$. 

\section{Results}

\subsection{Mutual information for two receptors}

First we consider the case $n=2$ in detail, and then generalize to arbitrary $n$.  
 With two receptors we have
\begin{equation}
	\begin{array}{ccccc}
		&2\tau\alpha_{\H/\L} &&\tau\alpha_{\H/\L} &\\
		0&\rightleftharpoons&1&\rightleftharpoons&2\\
		&\tau\beta &&2\tau\beta &
	\end{array}
\end{equation}
Thus we have a transition probability matrix
\begin{equation}\label{eq:transitionprobmatrix}
	\mathbf{P}_{\H/\L} = 
	\left[
		\begin{array}{ccc}
			1 - 2 \tau\alpha_{\H/\L} & 2 \tau\alpha_{\H/\L} & 0 \\
			\tau\beta & 1 - \tau(\beta + \alpha_{\H/\L}) & \tau\alpha_{\H/\L} \\
			0 & 2 \tau\beta & 1 - 2\tau\beta
		\end{array}
	\right]
\end{equation}
Using the capacity-achieving input distribution, the output states $Y^n$ form a Markov
chain (letting $\alpha_k = \alpha_\L + (\alpha_\H - \alpha_\L)p_k$):
\begin{equation}
	\mathbf{P}_Y = 
	\left[
		\begin{array}{ccc}
			1 - 2 \tau\alpha_0 & 2 \tau\alpha_0 & 0 \\
			\tau\beta & 1 - \tau(\beta + \alpha_1) & \tau\alpha_1 \\
			0 & 2 \tau\beta & 1 - 2\tau\beta
		\end{array}
	\right]
\end{equation}
Thus, with binary inputs $\H,\L$, the state is
completely described by two parameters, represented by $p_k$ for $k \in \{0,1\}$.
(For $k=2$, the fully bound case, there is no information transmission.)

The steady-state distribution of $Y$ is given by the Perron-Frobenius eigenvector.
Letting $\pi_i = \Pr(Y = i)$:
\begin{align}
	\pi_0 &:= \Pr(Y = 0) = \frac{\beta^2}{\beta^2 + 2 \alpha_0 \beta + \alpha_0 \alpha_1} \\
	\pi_1 &:= \Pr(Y = 1) = \frac{2 \alpha_0 \beta}{\beta^2 + 2 \alpha_0 \beta + \alpha_0 \alpha_1} \\
	\pi_2 &:= \Pr(Y = 2) = \frac{\alpha_0 \alpha_1}{\beta^2 + 2 \alpha_0 \beta + \alpha_0 \alpha_1} .
\end{align}
The mutual information rate is given by
\begin{align}
	\mir(X;Y) &= \PT{\lim_{m \rightarrow \infty}\frac{1}{m} I(X_1^m,Y_1^m)}\\ \nonumber
	&= H(Y_i \given Y_{i-1}) - H(Y_i \given X_i, Y_{i-1}),\text{ \PT{for arbitrary} }i.
\end{align}
Let $\phi(p)$ represent the {\em partial entropy function}, where
\begin{equation}
	\phi(p) = 
	\left\{
		\begin{array}{cl}
			0, & p = 0\\
			- p \log p, & p \neq 0
		\end{array}
	\right.
\end{equation}
(we use natural logarithms throughout, so information is measured in nats).
Further let $\trient(p,q)$ represent the {\em triple entropy function}, where
\begin{equation}
	\trient(p,q) = \phi(p) + \phi(q) + \phi(1-p-q),
\end{equation}
defined for $p+q\le 1$.
(Note that $\trient(p,0)$ reduces to the binary entropy function.) Then
\begin{align}
	H(Y_i \given Y_{i-1} = 0) &= \trient(2 \tau\alpha_0,0) \\
	H(Y_i \given Y_{i-1} = 1) &= \trient(\tau\alpha_1,\tau\beta) \\
	\label{eqn:insensitive-1}
	H(Y_i \given Y_{i-1} = 2) &= \trient(2 \tau\beta,0)
\end{align}
and
\begin{align*}
	H(Y_i | X_i, Y_{i-1} = 0) &= p_0 \trient(2 \tau\alpha_\H,0) + (1-p_0) \trient(2 \tau\alpha_\L,0)\\
	H(Y_i | X_i, Y_{i-1} = 1) &= p_1 \trient(\tau\alpha_\H,\tau\beta) + (1-p_1) \trient(\tau\alpha_\L,\tau\beta)\\
	H(Y_i | X_i, Y_{i-1} = 2) &= p_2 \trient(2 \tau\beta,0) + (1-p_2) \trient(2 \tau\beta,0)
\end{align*}
In 
\PT{the preceding equation,} we can reduce to $H(Y_i \given X_i, Y_{i-1} = 2) = \trient(2 \tau\beta,0)$,
the same as $H(Y_i \given Y_{i-1} = 2)$ in (\ref{eqn:insensitive-1}). Finally,
\begin{align}
	\label{eqn:MutualInformationRate} 
	\lefteqn{\mir(X;Y) = }& \\ 
	\nonumber 
	& \pi_0 \Big( \trient(2 \tau\alpha_0,0) - p_0 \trient(2 \tau\alpha_\H,0) - (1-p_0) \trient(2 \tau\alpha_\L,0) \Big) \\
	\nonumber 
	& + \pi_1 \Big( \trient(\tau\alpha_1,\tau\beta) - p_1 \trient(\tau\alpha_\H,\tau\beta) - (1-p_1) \trient(\tau\alpha_\L,\tau\beta) \Big).
\end{align}

\subsection{As $\tau \rightarrow 0$, capacity-achieving input distribution is IID}

In order for the capacity-achieving distribution to be IID, it must be true that 
(\ref{eqn:MutualInformationRate}) is maximized with $p_0 = p_1$ (which
implies $\alpha_0 = \alpha_1$).
It can be shown via numerical examples that the capacity-achieving input distribution is
{\em not} IID for {\em arbitrary} values of $\alpha_\L$, $\alpha_\H$, and $\beta$, \PT{and finite $\tau>0$}. 
However, we are interested in the \PT{limiting case where $\tau \rightarrow 0$.}


The reader may check that the state occupancy probabilities $\pi_i$ are independent of $\tau$. However, 
(\ref{eqn:MutualInformationRate}) becomes
\begin{align}
	\nonumber
	\lefteqn{\mir(X;Y) = }& \\ 
	\nonumber 
	& \frac{\pi_0}{\tau} \Big( \trient(2 \tau \alpha_0,0) \\ 
	\nonumber
	&- p_0 \trient(2 \tau \alpha_\H,0) - (1-p_0) \trient(2 \tau \alpha_\L,0) \Big) \\
	\nonumber
	& + \frac{\pi_1}{\tau} \Big( \trient(\tau \alpha_1,\tau\beta) \\ 
	\label{eqn:MutualInformationRate-tau} 
	&- p_1 \trient(\tau \alpha_\H,\tau\beta) - (1-p_1) \trient(\tau \alpha_\L,\tau\beta) \Big).
\end{align}
\PT{Although terms such as $\trient(2\tau\alpha_0,0)$ diverge as $\tau\to 0$, the divergent terms cancel in \eqref{eqn:MutualInformationRate-tau}, and $\mir(X;Y)$ remains finite in the limit.  Straightforward application of l'Hopital's rule yields}
\begin{align}
	\nonumber \lim_{\tau \rightarrow 0} & \frac{\pi_0}{\tau} \Big( \trient(2 \tau \alpha_0,0) \\ \nonumber & - p_0 \trient(2 \tau \alpha_\H,0) - (1-p_0) \trient(2 \tau \alpha_\L,0) \Big) & \\
	&= 2 \pi_0 \Big( \phi(\alpha_0) - p_0 \phi(\alpha_\H) - (1-p_0) \phi(\alpha_\L) \Big) \\ 
	\nonumber \lim_{\tau \rightarrow 0} & \frac{\pi_1}{\tau} \Big( \trient(\tau \alpha_1,\tau\beta) \\ \nonumber &- p_1 \trient(\tau \alpha_\H,\tau\beta) - (1-p_1) \trient(\tau \alpha_\H,\tau\beta) \Big) & \\
	&= \pi_1 \Big( \phi(\alpha_1) - p_1 \phi(\alpha_\H) - (1-p_1) \phi(\alpha_\L) \Big)
\end{align}
Finally, letting $\mathcal{Z}=\beta^2+2\alpha_0\beta+\alpha_0\alpha_1$,
\begin{align}
	\nonumber
	\lim_{\tau \rightarrow 0} \mir(X;Y)
\nonumber
=& \frac{2\beta}{\mathcal{Z}}\Big(
\beta\phi(\alpha_0)+\alpha_0\phi(\alpha_1)
\\ \nonumber &-(\beta p_0+\alpha_0 p_1)\phi(\alpha_\H)
\\&-((\alpha_0+\beta)-\beta p_0-\alpha_0 p_1)\phi(\alpha_\L)
  \Big)
\end{align}

To proceed, we assume that the input distribution is IID, and show that this is optimal.
Under the IID constraint we have $p_0=p_1=p$, which leads to several simplifications.  The stationary probabilities become  binomial: writing $\bar{\alpha}=\alpha_0=\alpha_1=\alpha_2$,
\begin{equation}
\pi_k(p)={2\choose k}\frac{\bar{\alpha}^k\PT{\beta^{2-k}}}{(\bar{\alpha}+\beta)^2},\text{ and }\mathcal{Z}=(\bar{\alpha}+\beta)^2,
\end{equation}
and the continuous time information rate reduces to
\begin{align}
	\nonumber
	\lefteqn{\lim_{\tau \rightarrow 0} \mir(X;Y)|_{p_k\equiv p} }& \\
	\label{eqn:2ReceptorMutualInformationRate}
	&= 
	2\left(\frac{\beta}{\bar{\alpha}+\beta}\right)\left( \phi(\bar{\alpha})-p\phi(\alpha_\H)-(1-p)\phi(\alpha_\L)  \right).
\end{align}
The IID capacity of this channel is given by
\begin{align}
	\ciid &= \max_{p} \lim_{\tau \rightarrow 0} \mir(X;Y)|_{p_k\equiv p}\\
	\label{eqn:2ReceptorCapacity}
	&= 2 \max_{p} \left(\frac{\beta}{\bar{\alpha}+\beta}\right)\left( \phi(\bar{\alpha})-p\phi(\alpha_\H)-(1-p)\phi(\alpha_\L)  \right) .
\end{align}

Thus we may state:
\begin{prop}
 For two receptors, as $\tau \rightarrow 0$, the capacity-achieving input distribution is IID.
\end{prop}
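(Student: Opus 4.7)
The plan is to show that for every feedback-capacity-achieving pair $(p_0,p_1)$ the limiting rate is bounded above by $2C_1$, where $C_1$ is the single-receptor limiting capacity, and then to observe from equation~(\ref{eqn:2ReceptorMutualInformationRate}) that the IID choice $p_0=p_1=p^\ast$ attains this bound exactly. Since the bound is global and is achieved on the diagonal, the capacity-achieving distribution must be IID.

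Let $h(p):=\phi(\alpha(p))-p\,\phi(\alpha_\H)-(1-p)\,\phi(\alpha_\L)$ with $\alpha(p):=\alpha_\L+(\alpha_\H-\alpha_\L)p$. Collecting the two l'Hopital limits just above the proposition statement gives the compact form
\begin{equation}
\lim_{\tau\to 0}\mir(X;Y)=2\pi_0\,h(p_0)+\pi_1\,h(p_1).
\end{equation}
The same limiting calculation applied to the single-receptor BIND channel of \cite{ThomasEckford2015IEEEtrans_submission_arXiv} yields $C_1=\max_p \beta\, h(p)/(\alpha(p)+\beta)$, so the pointwise bound $h(p)\le C_1(\alpha(p)+\beta)/\beta$ holds for every $p\in[0,1]$, with equality at the single-receptor optimizer $p=p^\ast$.

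Substituting this pointwise bound and simplifying reduces the upper-bound proof to verifying the algebraic identity
\begin{equation}
2\pi_0(\alpha_0+\beta)+\pi_1(\alpha_1+\beta)=2\beta,
\end{equation}
which follows by direct computation from the closed-form $\pi_i$ together with $\mathcal{Z}=\beta^2+2\alpha_0\beta+\alpha_0\alpha_1$, or more conceptually by combining the steady-state flux balance $2\pi_0\alpha_0+\pi_1\alpha_1=\pi_1\beta+2\pi_2\beta$ with $\sum_k\pi_k=1$. This gives $\lim_{\tau\to 0}\mir(X;Y)\le 2C_1$ for every $(p_0,p_1)$. Equation~(\ref{eqn:2ReceptorMutualInformationRate}) then shows that the IID rate at $p_0=p_1=p$ equals $2\cdot\beta h(p)/(\alpha(p)+\beta)$, which attains $2C_1$ at $p=p^\ast$, so the diagonal realizes the global maximum.

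The main obstacle is conceptual rather than technical: one has to recognize that the two-dimensional optimization decouples into two copies of a single-receptor optimization once $h$ is bounded pointwise, and that the matching $\pi$-weighted coefficient telescopes to $2\beta$. This ``telescoping'' is really the detailed-balance statement for the birth--death chain, which suggests that the same strategy --- bounding $h$ at each state by the tight single-receptor constant and exploiting flux balance to collapse the weights --- should carry over verbatim to the $n$-receptor extension in the next subsection.
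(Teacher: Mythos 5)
Your proof is correct, but it takes a genuinely different route from the one in the paper. The paper's proof is purely information-theoretic and very short: twice the single-receptor capacity is an upper bound because two \emph{independent} receptors observed separately (i.e., made distinguishable) can only carry more information than the indistinguishable pair, and the IID rate in \eqref{eqn:2ReceptorCapacity} is exactly twice the single-receptor capacity from \cite{ThomasEckford2015IEEEtrans_submission_arXiv,EckfordThomas2013ISIT}, so $C_{\mathrm{IID}}\le C\le C_{\mathrm{FB}}\le 2(C_{\mathrm{IID}}/2)$. Your argument instead stays entirely inside the two-receptor chain: you bound the per-edge contribution $h(p_k)$ by the tight single-receptor constraint $h(p)\le C_1(\alpha(p)+\beta)/\beta$ and then collapse the weights via the identity $2\pi_0(\alpha_0+\beta)+\pi_1(\alpha_1+\beta)=2\beta$, which (as you note) is exactly stationarity/detailed balance for the birth--death chain; I checked the algebra and it is right. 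What your version buys: it is self-contained (it does not invoke the ``split into distinguishable receptors'' super-channel argument), and, when the single-receptor optimizer $p^\ast$ is unique, your equality analysis shows that \emph{every} maximizer of the limiting feedback rate lies on the diagonal, which is slightly stronger than the paper's conclusion that the IID input \emph{achieves} capacity. Your closing observation is also on target: the decomposition $I=\sum_k\pi_k(n-k)\Psi(p_k)$ in the next subsection is precisely the $n$-receptor version of your edge-wise bound, even though the paper again falls back on the distinguishable-receptors argument for Proposition 2. The one thing both your proof and the paper's gloss over is the interchange of $\max$ over $(p_0,p_1)$ with the limit $\tau\to 0$ (both optimize the limiting rate rather than the limit of the optima), so you are not missing anything the authors supply.
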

\begin{proof}
Let $C$ represent the capacity.
If we take $\ciid/2$ in (\ref{eqn:2ReceptorCapacity}), we obtain exactly the capacity for a single receptor as $\tau \rightarrow 0$, from \cite{EckfordThomas2013ISIT,ThomasEckford2015IEEEtrans_submission_arXiv}.

We know that $C \geq \ciid$. 
The capacity two independent receptors can be no greater than twice the capacity of a single
receptor, so $C \leq 2 (\ciid/2) = \ciid$. Since $C$ is bounded above and below by $\ciid$,
the result follows.
\end{proof}

We give a numerical example of this result in Figure \ref{fig:multiple-receptors}, where the global maximum of the MI rate indeed lies on the diagonal $p_0=p_1=p$.  

\begin{figure}[htbp] 
   \centering
   \includegraphics[width=2.5in]{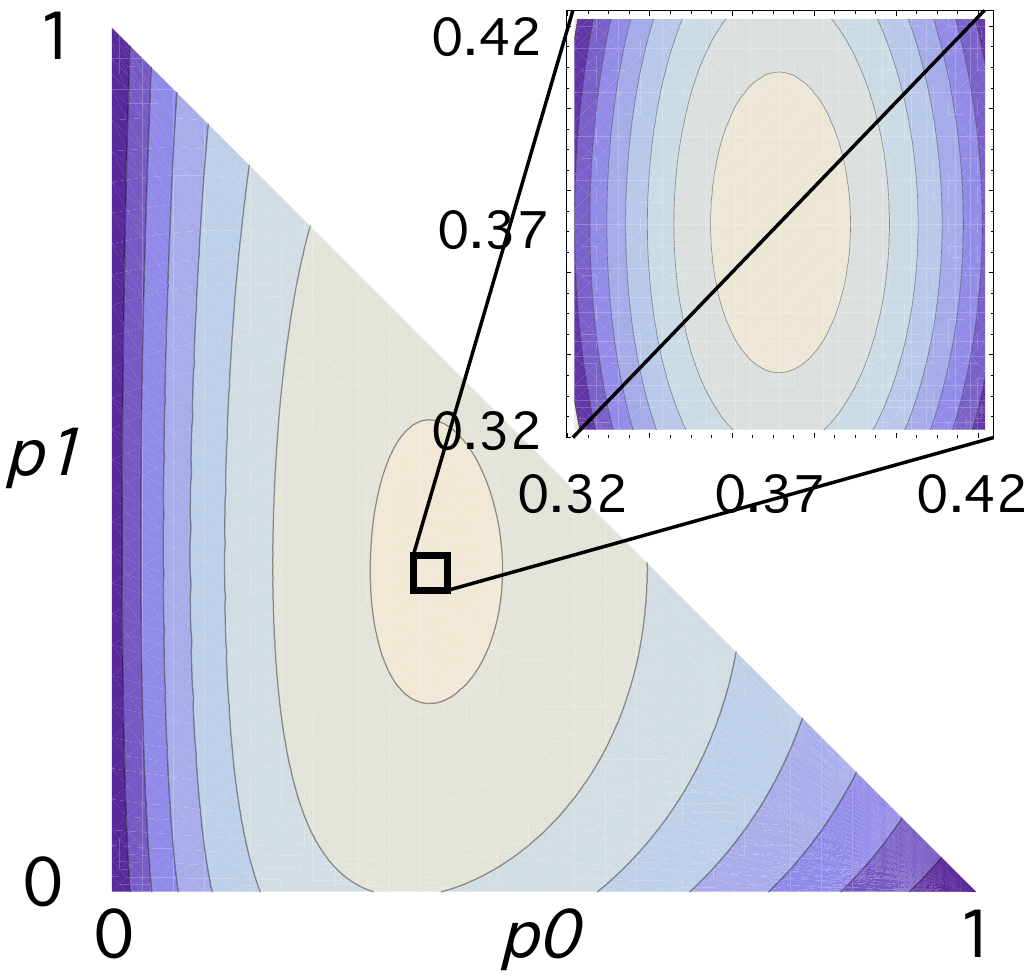} 
   \caption{Mutual information rate for two independent receptors, with feedback, as a function of the high-input probability when unbound ($p_0$) and when singly bound ($p_1$).  The high-input probability when the receptor is doubly bound ($p_2$) does not affect the MI rate.  For this example the continuous time transition rates are  $\alpha_\L=1, \alpha_\H=10, \beta=20$.  All rates are in Hz. The optimal MI rate of 3.57367 nats per second is attained when $p_0=p_1\approx \PT{0.371696}$. Diagonal  inside the zoomed-in box marks the line $p_0=p_1$.}
   \label{fig:multiple-receptors}
\end{figure}

In \cite{EckfordThomas2013ISIT,ThomasEckford2015IEEEtrans_submission_arXiv} we showed rigorously that the feedback capacity and the IID capacity for the single receptor were identical (this result held for all values of \PT{$\tau>0$}, i.e., all settings of the parameters).  On the other hand, the feedback capacity for the two receptors cannot exceed twice the feedback capacity of a single receptor, when the receptors are independent, because we could consider the two independent receptors separately.  Therefore the system with two identical, independent receptors inherits the property that feedback capacity = IID capacity = capacity.  

\subsection{Generalizing to $n > 2$}

Returning to arbitrary $n > 2$, we adopt the following notation.  Let $\bar{\alpha}_k(p_k)=\alpha_\L+p_k(\alpha_\H-\alpha_\L)$ be the average \textit{per capita} transition rate from state $k$ to state $k+1$.  
Define
\begin{align}
A_k&=\left\{
\begin{array}{ll}
\beta^n,&k=0\\
{n\choose k}\beta^{n-k}\prod_{j=0}^{k-1}\bar{\alpha}_j(p_j),&  1\le k \le n
\end{array}\right.\\
\mathcal{Z}(p_0,\ldots,p_{n-1})&=\sum_{k=0}^n A_k.
\end{align}
Then $\pi_k \dfn \Pr\{Y=k\}=A_k/\mathcal{Z}$ is the stationary probability of the $k$-bound state.  If we take $p_k=p$ for all $k$ (the case of IID inputs) then we have the same \textit{per capita} binding rate in each state, $\bar{\alpha}_k=\bar{\alpha}=\alpha_\L+p(\alpha_\H-\alpha_\L)$.  In this case the stationary distribution is binomial:
\begin{equation}
\pi_k^{\text{IID}}={n\choose k}\left( \frac{\bar{\alpha}}{\bar{\alpha}+\beta} \right)^k\left( \frac{\beta}{\bar{\alpha}+\beta} \right)^{n-k}
\end{equation}
where $\bar{\alpha}/(\bar{\alpha}+\beta)$ is the equilibrium probability of any given receptor being in the bound state.  

Fixing some $n>1$, we write $I(\mathbf{p})$, the average mutual information rate per time step, for $I(\alpha_\L,\alpha_\H,\beta,p_0,\ldots,p_{n-1})$, i.e.~we leave the dependence on the transition probabilities implicit. The vector $\mathbf{p}=[p_0,p_1,\ldots,p_{n-1}]^\intercal$ represents the high (\textit{versus} low) input probabilities for the feedback case.  The IID case corresponds to $\mathbf{p}=\mathbf{1}p\equiv[p,p,\ldots,p]^\intercal$. 
$I(\mathbf{p})$ may be written as a sum over the information rates contributed by each edge: $I=\sum_{k=0}^{n-1}I_k=\sum_{k=0}^{n-1}\pi_k \Phi_k$, 
%
where
\begin{align}
\nonumber
\Phi_k =& \phi\left((n-k)\bar{\alpha}_k\right)\\&-\Big((1-p_k)\phi((n-k)\alpha_\L)+p_k\phi((n-k)\alpha_\H)    \Big)
\\
=& (n-k)\Psi(p_k)
\end{align}
with 
$\Psi(p)=\phi\Big(p\alpha_\H+(1-p)\alpha_\L\Big)-\Big(p\phi(\alpha_\H)+(1-p)\phi(\alpha_\L)\Big).$ 
Here we have used the product rule property of the partial entropy, $\phi(kp)=k\phi(p)+p\phi(k)$.

As in the case of $n=2$ receptors, the IID case simplifies to
\begin{align}\nonumber
I(\mathbf{1}p)&=\sum_{k=0}^{n-1}\pi_k(n-k)\Psi(p)
=\Psi(p)\sum_{k=0}^{n-1}{n\choose k}\frac{\bar{\alpha}^k\beta^{n-k}}{(\bar{\alpha}+\beta)^n}(n-k)\\
&=\Psi(p)\left(n-n\frac{\bar{\alpha}}{\bar{\alpha}+\beta}\right) = n\Psi(p)\left(\frac{\beta}{\bar{\alpha}+\beta}\right)
\end{align}
which is $n$ times the mutual information rate for a single receptor. This leads to the following.
\begin{prop}
	For $n$ receptors, as $\tau \rightarrow 0$, capacity $C$ is $n$ times the single-receptor capacity,
	and the capacity-achieving input distribution is IID.
\end{prop}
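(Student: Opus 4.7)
The plan is to extend the sandwich argument used for the $n=2$ proposition, now anchored on the $I(\mathbf{1}p)$ identity derived immediately above. Since
\[
I(\mathbf{1}p) \;=\; n\,\Psi(p)\,\frac{\beta}{\bar{\alpha}+\beta}
\]
is exactly $n$ times the single-receptor mutual information rate evaluated at the same $p$, maximizing over $p$ yields $\ciid = n\,C_{1}$, where $C_{1}$ denotes the single-receptor (feedback, IID, and Shannon) capacity established in \cite{EckfordThomas2013ISIT,ThomasEckford2015IEEEtrans_submission_arXiv}. In particular, the $p$ that achieves $\ciid$ coincides with the single-receptor optimizer.

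For the matching upper bound $C \leq n\,C_{1}$, I would view the $n$ indistinguishable receptors as $n$ labeled, conditionally independent parallel channels driven by the common (possibly feedback-dependent) input $X^m$, with labeled outputs $Y^{(i),m}$. Since the aggregate output $Y^m$ is a deterministic function of $(Y^{(1),m},\ldots,Y^{(n),m})$, data processing combined with subadditivity of joint entropy and the conditional independence identity
\[
H(Y^{(1),m},\ldots,Y^{(n),m}\mid X^m) \;=\; \sum_{i=1}^n H(Y^{(i),m}\mid X^m)
\]
yields $I(X^m;Y^m) \leq \sum_{i=1}^n I(X^m;Y^{(i),m})$. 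Because the conditional law of each $Y^{(i),m}$ given $X^m$ is that of a single BIND channel, every summand is bounded in the limit by $C_{1}$, giving $C \leq n\,C_{1}$.

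Combining the two bounds, $n\,C_{1} = \ciid \leq C \leq n\,C_{1}$ forces $C = \ciid = n\,C_{1}$ and identifies the IID input (with $p$ set to the single-receptor optimizer) as capacity-achieving. The most delicate step is the per-receptor bound in the feedback setting: although the encoder observes the aggregate $Y^m$ rather than the individual $Y^{(i),m}$, the single-receptor capacity still upper-bounds $\lim I(X^m;Y^{(i),m})/m$ for any causal encoder, since the channel seen by each labeled receptor is the single BIND channel and additional information available to the encoder from the other receptors' outputs cannot push its rate above that channel's feedback capacity. Once this is in hand, the rest is a direct transcription of the $n=2$ sandwich.
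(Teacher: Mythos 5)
Your proof is essentially the paper's own argument: the lower bound $C \ge \ciid = nC_1$ comes from the identity $I(\mathbf{1}p) = n\Psi(p)\,\beta/(\bar{\alpha}+\beta)$ derived just before the proposition, and the upper bound comes from viewing the $n$ indistinguishable receptors as $n$ labeled independent receptors driven by the common input, each of whose rate cannot exceed the single-receptor feedback capacity (which prior work showed equals the single-receptor IID capacity $C_1$). The paper simply asserts the upper bound as ``clear''; your subadditivity/data-processing decomposition and your explicit flagging of the delicate point---that the encoder's feedback is the aggregate output rather than each receptor's own---fill in detail the paper leaves implicit, but do not change the route.
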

\begin{proof}
By a similar argument to Proposition 1, 
considering $n$ independent distinguishable receptors receiving the same input signal, it is clear that the $n$-receptor information rate with feedback cannot exceed $n$ times the single receptor rate with feedback.  Therefore for $n$ indistinguishable receptors, $C_\text{IID}=C$,  for arbitrary $n$, and the result follows.
\end{proof}

In contrast, \emph{non-independent} receptors can have feedback capacity greater than IID capacity.  \PT{Fig.~\ref{fig:cooperative} shows this effect for a channel for which ligand binding is \emph{cooperative} instead of independent; e.g.~with transition probability matrix (\textit{cf}.~\eqref{eq:transitionprobmatrix}) 
\begin{equation}
	\mathbf{P}_{\H/\L} = 
	\left[
		\begin{array}{ccc}
			1 -  \tau\alpha_{\H/\L} &  \tau\alpha_{\H/\L} & 0 \\
			\tau\beta & 1 - \tau(\beta + \alpha_{\H/\L}) & \tau\alpha_{\H/\L} \\
			0 &  \tau\beta & 1 - \tau\beta
		\end{array}
	\right].
\end{equation}}

\begin{figure}[htbp] 
   \centering
   \includegraphics[width=2.5in]{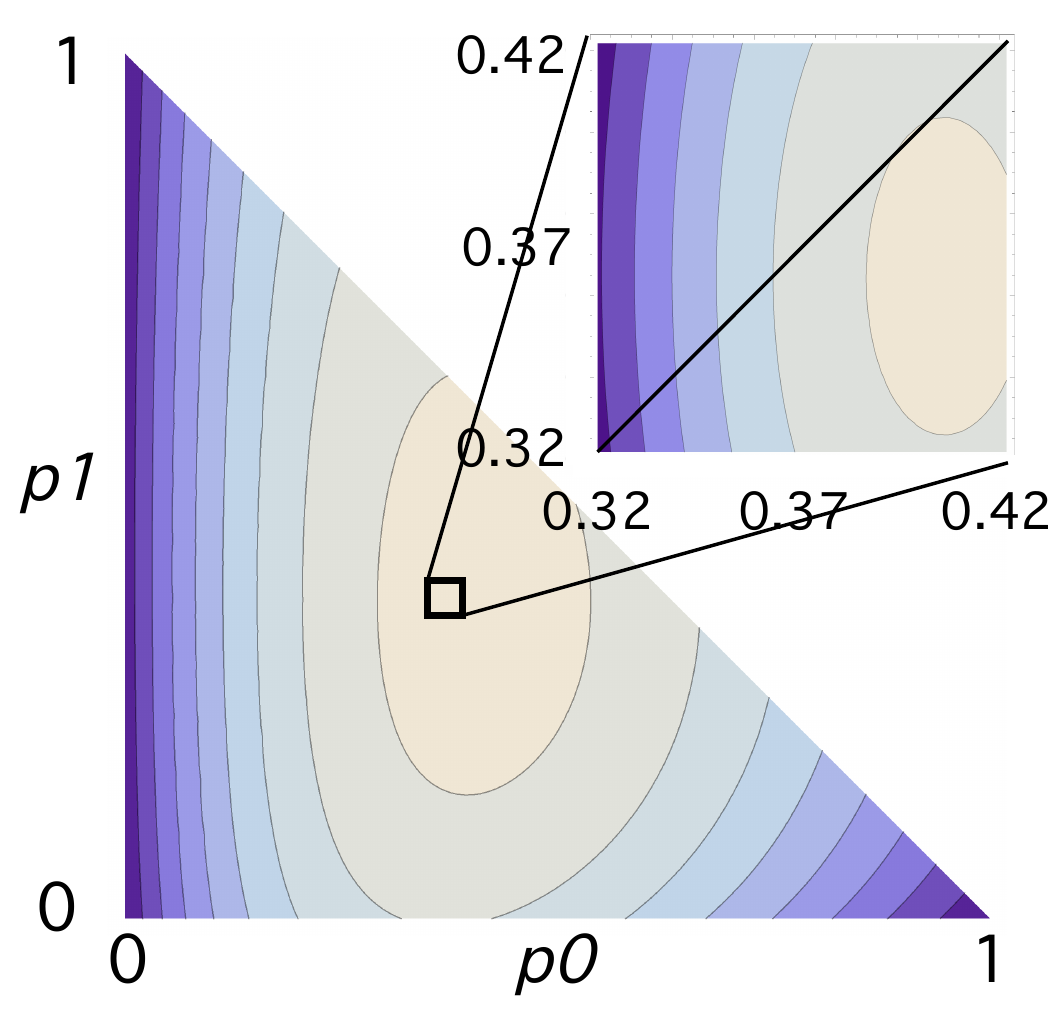} 
   \caption{Mutual information rate, with feedback, for two receptors, as a function $p_0$ and $p_1$, when the receptors are \emph{not independent}.  Total transition rates: $i\to i+1$, $\alpha_{i,\H}=10$Hz or $\alpha_{i,\L}=1$Hz; $i\to i-1$, $\beta_i=20$Hz.  The optimal MI rate of 2.1026 nats per second is attained when $p_0\approx 0.407$, $p_1\approx 0.364$. Zoom-in box and $p_0=p_1$ line positions same as Fig.~\ref{fig:multiple-receptors}.}
   \label{fig:cooperative}
\end{figure}

\section{Discussion}
Analysis of the single-receptor BIND channel introduced in \cite{ThomasEckford2015IEEEtrans_submission_arXiv,EckfordThomas2013ISIT} turns on the observation that it falls in the class of channels satisfying the Chen-Berger condition \cite{BergerYing2003ieeeISIT,che05}.  At the same time, it may be viewed as an instance of a POST channel (Previous Output is the STate) \cite{PermuterAsnaniWeissman2014IEEETransIT}.  Here we show that a ligand-binding system comprising $n$ identical, independent receptors also satisfies these conditions.  The $n$-receptor BIND channel has a mutual information rate equal to $n$ times the mutual information rate of the single receptor BIND channel, and its capacity is realized by an IID input source with optimal high concentration probability $p_*$ that is the same for one receptor as it is for $n$ receptors.  Under the IID input condition the receptors' stationary state distribution becomes binomial. suggesting a relation to the well-studied binomial channel  
\cite{KomninakisVandenbergheWesel2001IEEE_ISIT}.
\PT{As discussed in \cite{ThomasEckford2015IEEEtrans_submission_arXiv}, the physical channel model breaks down as $\tau\to 0$ in the sense that the concentration at the receiver will not remain IID at arbitrarily fine time scales.  The way in which biophysical constraints restrict the input ensemble will be system specific, and a topic for future investigations.}


\bibliographystyle{IEEEtran}

\end{document}